\theoremstyle{definition}
\newtheorem{example}{Example}[section]
\newtheorem{definition}{Definition}[section]
\newtheorem{theorem}{Theorem}[section]
\newtheorem{corollary}{Corollary}[section]
\newcommand{\Var}{\ensuremath{\mathit{Var}}}
\newcommand{\Cfg}{\ensuremath{\mathit{Cfg}}}
\newcommand{\Nat}{\ensuremath{\mathit{Nat}}}
\newcommand{\Term}{\ensuremath{\mathit{Term}}}
\newcommand{\SUM}{\ensuremath{\mathit{SUM}}}
\newcommand{\emptylist}{\ensuremath{\mathit{Nil}}}
\newcommand{\emptydict}{\ensuremath{\epsilon}}
\newcommand{\spaceVert}{\ensuremath{\; \vert \;}}
\newcommand{\bigTab}{\ensuremath{\;\;\;\;}}
\newcommand{\aptrans}{\ensuremath{\Rightarrow^{\forall}}}
\newcommand{\optrans}{\ensuremath{\Rightarrow^{\exists}}}
\newcommand{\TStrans}[1]{\ensuremath{\Rightarrow^{\mathcal{T}}_{#1}}}
\newcommand{\ETStrans}[1]{\ensuremath{\Rightarrow^{\mathcal{\ext(T)}}_{#1}}}
\newcommand{\cfg}[2]{\langle #1 \spaceVert #2 \rangle}
\newcommand{\caexpr}[1]{ \llfloor #1 \rrfloor }
\newcommand{\cbexpr}[1]{ \llceil #1 \rrceil }
\newcommand{\cstmt}[1]{ \llbracket #1 \rrbracket }
\newcommand{\idc}[1]{ [#1] }
\newcommand{\intc}[1]{ \overline{#1} }
\newcommand{\boolc}[1]{ \underline{#1} }
\newcommand{\lisymb}{ \rightsquigarrow }
\newcommand{\List}[1]{{#1}^*}
\newcommand{\ext}{\ensuremath{\theta}}
\newcommand{\truesymb}{\mathit{True}}
\newcommand{\falsesymb}{\mathit{False}}
\title{Reducing Total Correctness to Partial Correctness by a Transformation of the Language Semantics}
\author{Andrei-Sebastian~Buruiană
  \institute{Alexandru Ioan Cuza University \& Bitdefender}
  \email{sburuiana@bitdefender.com}
\and
Ștefan~Ciobâcă (\Letter)
\institute{Alexandru Ioan Cuza University}
\email{stefan.ciobaca@info.uaic.ro}
}
\begin{document}
\maketitle

\begin{abstract}

  We give a language-parametric solution to the problem of \emph{total
    correctness}, by automatically reducing it to the problem of
  \emph{partial correctness}, under the assumption that an expression
  whose value decreases with each program step in a well-founded order
  is provided. Our approach assumes that the programming language
  semantics is given as a rewrite theory. We implement a prototype on
  top of the RMT tool and we show that it works in practice on a
  number of examples.
  
\end{abstract}

\section{Introduction}
The line of work on reachability logic
(see~\cite{rosu-ellison-schulte-2010-amast,rosu-stefanescu-2012-oopsla,rosu-stefanescu-ciobaca-moore-2013-lics,stefanescu-ciobaca-mereuta-moore-serbanuta-rosu-2014-rta,stefanescu-park-yuwen-li-rosu-2016-oopsla})
proposes language-parametric verification tools for programs. We
continue this line of work by introducing a language-parametric total
correctness checker. Our checker works by reducing the problem of
total correctness to the problem of partial correctness by a
transformation of the semantics of the programming language.

A program is \emph{partially correct} if its output satisfies the
postcondition for all inputs on which it terminates. A program is
\emph{totally correct} if it terminates on all inputs and its output
satisfies the postcondition. Therefore, total correctness is usually
proven by splitting the problem into two parts: first establish
partial correctness by using various Hoare-like logics
(e.g.,~\cite{Cao2018,stefanescu-park-yuwen-li-rosu-2016-oopsla}), and
then establish termination using a specialized termination prover
(e.g.,~\cite{Giesl2017,ALARCON2007105}).

More rarely, logics that can directly prove total correctness
(e.g.,~\cite{key-ifm2017,da2016modular}) are used. However, recent
work in automated termination proving
(e.g.,~\cite{termination-cav2013,variance-popl2007,ramsey-tacas2013,t2-tacas2016,terminator-pldi2006})
shows that it is beneficial to use information obtained by proving
properties of a program (e.g., invariants) in the termination
argument. Most formal verification tools $V$ take a (possibly
annotated) program $P$ as input and return $V(P)$, which is \emph{yes}
if the verification is successful and \emph{no} if there is a
counterexample; additionally, because such problems are typically
undecidable, the verifier could return \emph{unknown} or it could loop
indefinitely. In this setting, if the programming language of $P$
changes (e.g., when a new language standard is published), the
verifier $V$ needs to be upgraded as well and also proved sound --
which may not be trivial. Another downside of this approach is that
the same verification techniques need to be implemented and proved
sound for all languages of interest.

In our line of work
(see~\cite{rosu-ellison-schulte-2010-amast,rosu-stefanescu-2012-oopsla,rosu-stefanescu-ciobaca-moore-2013-lics,stefanescu-ciobaca-mereuta-moore-serbanuta-rosu-2014-rta,stefanescu-park-yuwen-li-rosu-2016-oopsla}),
we propose to build language-parametric verifiers $V$: in this
parametric setting, $V$ takes as input both the (possibly annotated)
program $P$ and the operational semantics $S$ of the programming
language of $P$. Then $V(S, P)$ returns \emph{yes}, \emph{no} or
\emph{unknown} (or loops indefinitely), depending on the particular
property that it checks of the program $P$ in the operational
semantics $S$. The advantage of this approach is that the verifier is
proved sound once and can then be used for various programming
languages.

Reachability logic, which is a sound and relatively complete proof
system for partial correctness, was introduced
in~\cite{stefanescu-ciobaca-mereuta-moore-serbanuta-rosu-2014-rta}. For
a verifier $V$ that implements this logic, $V(S, P)$ checks whether
the (annotated) program $P$ is partially correct, when interpreted
using the operational semantics $S$. In the present article, we
propose to construct a language-parametric verifier $V_t(S, P)$ that
checks \emph{total correctness} of the program $P$ in the operational
semantics $S$. Our approach works by applying a transformation on $S$
and the program $P$. We develop and prove the soundness of a
transformation function $\theta$ such that $V_t(S, P) = V(\theta(S),
\theta(P)).$ This means that total correctness of the program $P$ in
the semantics $S$ is the same as partial correctness of the program
$\theta(P)$ in the semantics $\theta(S)$ and therefore the existing
partial correctness verifier can be used in conjunction with the
transformation $\theta$ to obtain a total correctness prover for any
language.

\begin{figure}[t]
\[\begin{array}{llr}
\mathit{Id} ::= & {\tt x} \mid {\tt y} \mid {\tt z} \mid \ldots & \textit{identifiers (program variables)} \\
\mathit{Int} ::= & 0, 1, -1, \ldots & \textit{integers} \\
\mathit{Bool} ::= & \truesymb \mid \falsesymb & \textit{booleans} \\
\mathit{AE} ::= & \mathit{Int} \mid \mathit{Id} \mid \mathit{AE} +
                  \mathit{AE} \mid \ldots & \textit{arithmetic expressions}\\
\mathit{BE} ::= & \mathit{Bool} \mid \mathit{AE} = \mathit{AE} \mid \mathit{AE} <
                  \mathit{AE} \mid \mbox{$\mathit{not}$ $\mathit{BE}$} \mid \ldots & \textit{boolean expressions}\\
\mathit{Stmt} ::= & \mathit{skip} & \textit{empty statement} \\
                  & \mid \mbox{$\mathit{Stmt}$; $\mathit{Stmt}$} &
                                                                   \textit{sequence
                                                                   of statements} \\
                  & \mid \mbox{$\mathit{Id}$ := $\mathit{AE}$} & \textit{assignment}\\
                  & \mid \mbox{while $\mathit{BE}$ do $\mathit{Stmt}$}
                         & \textit{while loop} \\
                  & \mid \mbox{if $\mathit{BE}$ then $\mathit{Stmt}$ else
                    $\mathit{Stmt}$} & \textit{conditional statement} \\
\end{array}\]
\caption{\label{fig:imp-abstract-syntax}The abstract syntax, in
  BNF-like notation, of the IMP language, which is used throughout the
  paper as a running example.}
\end{figure}

Our approach assumes that the operational semantics $S$ of the
language in question is given as a rewrite theory with rules of the
form
\begin{center}
 $l \Rightarrow r\mbox{ if } b$,
 \end{center} 
 where $l$ and $r$ are two terms representing program configurations
 and $b$ is a boolean constraint. For our running example, we use a
 simple imperative language that we call IMP (see, e.g.,
 \cite{winskel1993formal}), whose abstract syntax is presented in
 Figure~\ref{fig:imp-abstract-syntax}. IMP configurations are pairs
 $\cfg{c_1 \lisymb c_2 \lisymb \cdots \lisymb c_n \lisymb
   \emptylist}{\mathit{env}}$ where $c_1, c_2, \ldots, c_n$ is a list
 of expressions or statements that are to be evaluated/executed in
 order and $\mathit{env}$ is a map from program identifiers (program
 variables) to integers. The notation $\emptylist$ stands for the
 empty list. The semantics of IMP consists of rewrite rules like
\[\begin{array}{lclr}
\cfg{(v := i) \lisymb l }{\mathit{env}} & \Rightarrow &
\cfg{l}{\mathit{update}(v, i, \mathit{env})} & \mbox{ and }\\
\cfg{\mbox{(if $b$ then $s_1$ else $s_2$)} \lisymb
  l}{\mathit{env}} & \Rightarrow & \cfg{s_1 \lisymb
  l}{\mathit{env}}\mbox{ if $b = \truesymb$}, &
\end{array}\]
\noindent which define the meaning of all language operators. The two
rules above illustrate parts of the semantics of the assignment
statement and of the if-then-else statement, respectively. The full
details on the syntax and semantics of IMP are formally given in
Section~\ref{sec:prelim}. However, we note that it is possible to
faithfully model a variety of languages in this manner, as shown
in~\cite{DBLP:journals/iandc/SerbanutaRM09}. Given a language
semantics $S$ as a parameter, reachability logic (defined
in~\cite{stefanescu-ciobaca-mereuta-moore-serbanuta-rosu-2014-rta})
can prove sequents of the form
\[S \vdash l \land \phi_l \Rightarrow^{\forall} \exists \tilde{x}.(r \land \phi_r),\]
where $l$ and $r$ are configuration terms and $\phi_l, \phi_r$ are
constraints. The intuitive meaning of a sequent is that any instance
of the configuration $l$ satisfying constraint $\phi_l$ either
diverges (does not terminate) or it reaches (along any path, hence the
$\forall$) in a finite number of steps an instance of the
configuration $r$ satisfying constraint $\phi_r$ and agreeing with $l$
on all variables except $\tilde{x}$. The full syntax and semantics of
the sequents are presented formally in Section~\ref{sec:prelim}. Note
that such sequents subsume the notion of partial correctness. For
example, the partial correctness of the SUM program
\begin{verbatim}
s := 0
while not (m = 0) do s := s + m; m := m - 1
\end{verbatim}
\noindent is represented by the following partial correctness sequent
\[\begin{array}{l} S \vdash \cfg{\textit{SUM}}{\mathit{env}_1} \land
    \mathit{lookup}({\tt m},
    \mathit{env}_1) = z \land z \geq 0 \Rightarrow^{\forall} \\
    \qquad \exists \mathit{env}_2.(\cfg{\mathit{skip}}{\mathit{env}_2} \land \mathit{lookup}({\tt s},
    \mathit{env}_2) = z(z+1)/2),\end{array}\]
\noindent which is derivable using reachability logic. The sequent
states that if we run the $\textit{SUM}$ program in a configuration
where the environment $\mathit{env}_1$ maps the program identifier
${\tt m}$ to a positive integer $z$, then the program eventually
reaches a configuration where there is nothing left to execute (hence
the $\mathit{skip}$) and where the identifier ${\tt s}$ is mapped to
the sum of the first $z$ positive naturals. The sequent
\[\begin{array}{l} S \vdash \cfg{\textit{SUM}}{\mathit{env}_1} \land
\mathit{lookup}({\tt m}, \mathit{env}_1) = z \Rightarrow^{\forall} \\ \qquad
\exists \mathit{env}_2.(\cfg{\mathit{skip}}{\mathit{env}_2} \land \mathit{lookup}({\tt s},
\mathit{env}_2) = z(z+1)/2)\end{array}\]
\noindent is also derivable (note that the constraint $z \geq 0$ does
not appear anymore). The sequent is valid when interpreted in a
partial correctness sense, since the program loops forever when
$z < 0$. We propose a language transformation that builds an
artificial semantics $\ext(S)$ from the semantics $S$ by adding to the
configuration a parameter that decreases with each rewrite step. The
formal expression that is used for the parameter is a program variant
(i.e., an expression whose value decreases with each program
step). For example, the previously illustrated rewrite rules for the
assignment statement and respectively for the conditional statement
become:
\[\begin{array}{lclr}
    (\cfg{(v := i) \lisymb l }{\mathit{env}}, n) & \Rightarrow &
    (\cfg{l}{\mathit{update}(v, i, \mathit{env})}, n - 1) & \mbox{ and }\\
    (\cfg{\mbox{(if $b$ then $s_1$ else $s_2$)} \lisymb
    l}{\mathit{env}}, n) & \Rightarrow & (\cfg{s_1 \lisymb
    l}{\mathit{env}}, n - 1) \mbox{ if $b = \truesymb$}. &
  \end{array}\]
In the new semantics, $\ext(S)$, all programs terminate, since the
variant is in a well-founded order and therefore it cannot decrease
indefinitely. Therefore, in order to prove total correctness of a
program $P$ in $S$, it is sufficient to prove partial correctness of
$(P, B)$ in $\ext(S)$, where $B$ is a sufficiently large bound. For
our running example, we can establish that
\[\begin{array}{l}
\ext(S) \vdash (\cfg{\textit{SUM}}{\mathit{env}_1}, 200|z| + 200)
\land \mathit{lookup}({\tt m}, \mathit{env}_1) = z \land z \geq 0
\Rightarrow^{\forall} \\ \qquad \exists g,\mathit{env}_2.((\cfg{\mathit{skip}}{\mathit{env}_2}, g)
\land \mathit{lookup}({\tt s}, \mathit{env}_2) = z(z+1)/2),\end{array}\]
which implies by our soundness theorem that \textit{SUM} is totally
correct, under the precondition that the program variable ${\tt m}$
starts with a nonnegative value. We have chosen the upper bound
$200|n| + 200$, since it is sufficiently large to allow for the
program to finish. The variable $g$ captures the number of execution
steps remaining from the initial $200|n| + 200$ steps. The sequent
above can be proven automatically (by relying on an invariant-like
annotation for the while loop) in our implementation. However, by our
soundness theorem, there is no bound $B$ such that
\[\begin{array}{l}
\ext(S) \vdash (\cfg{\textit{SUM}}{\mathit{env}_1}, B)
\land \mathit{lookup}({\tt m}, \mathit{env}_1) = z \Rightarrow^{\forall}
\\ \qquad \exists g,\mathit{env}_2.((\cfg{\mathit{skip}}{\mathit{env}_2}, g) \land
\mathit{lookup}({\tt s}, \mathit{env}_2) = z(z+1)/2),\end{array}\]
\noindent meaning that it is impossible to prove the total correctness
of the program $\textit{SUM}$ if there is no precondition for the
initial value of the program variable ${\tt m}$.

In contrast with some other automated termination provers (discussed
in Section~\ref{sec:related}), our method requires to provide the
upper bound on the number of steps manually. In the example above, we
picked $200|n| + 200$ because, intuitively, the program has a
linear-time complexity. The constant $200$ should be large enough to
allow the program to terminate.

The advantage and novelty of our method is that it is
language-parametric (the semantics of the language is given as an
input to our reduction). The main technical difficulties are to find a
sound but general enough transformation $\theta$ (given in
Definition~\ref{def:transformation}) and the right statement of the
soundness theorem (Theorem~\ref{th:main}).

\paragraph{Contributions.} \begin{enumerate}

\item We propose a language-parametric method of proving total
  correctness;

\item Our approach works by reducing total correctness to partial
  correctness using a language transformation and therefore it can
  also be seen as an argument for semantics-parametric program
  verifiers;

\item We implement the reduction in the RMT~\cite{rmt} tool and we use
  it to prove several interesting examples.

\end{enumerate}

\paragraph{Organization.}

In Section~\ref{sec:prelim}, we briefly introduce our notations for
many-sorted algebras and we recall matching logic and reachability
logic, which are the formalisms that we use to define and reason about
the operational semantics of languages. In
Section~\ref{sec:reduction}, we present our transformation, which
reduces total correctness to partial correctness, we prove its
soundness and we present the main
difficulties. Section~\ref{sec:related} discusses related work and
Section~\ref{sec:conclusion} concludes the paper, including possible
directions for future work.

\section{Preliminaries: Proving Partial Correctness using Reachability
  Logic}
\label{sec:prelim}
This section fixes notations for many-sorted algebra and recalls
matching logic and reachability logic.
\begin{figure}[t]

  \begin{center}
  \begin{tabular}{cc}
    \begin{tabular}{lll}
      $\intc{\;\cdot\;}$ & : & $\mathit{Int} \rightarrow \mathit{AE}$ \\
      $\idc{\cdot}$ & : & $\mathit{Id} \rightarrow \mathit{AE}$ \\
      $\mathit{plus}$ & : & $\mathit{AE} \times \mathit{AE} \rightarrow \mathit{AE}$ \\

      $\boolc{\; \cdot \;}$ & : & $\mathit{Bool} \rightarrow \mathit{BE}$\\
      $\mathit{eq}$ & : & $\mathit{AE} \times \mathit{AE} \rightarrow \mathit{BE}$ \\
      $\mathit{not}$ & : & $\mathit{BE} \rightarrow \mathit{BE}$ \\

      $\mathit{assign}$ & : & $\mathit{Id} \times \mathit{AE} \rightarrow \mathit{Stmt}$ \\
      $\mathit{seq}$ & : & $\mathit{Stmt} \times \mathit{Stmt} \rightarrow \mathit{Stmt}$ \\
      $\mathit{ite}$ & : & $\mathit{BE} \times \mathit{Stmt} \times \mathit{Stmt} \rightarrow \mathit{Stmt}$ \\
      $\mathit{while}$ & : & $\mathit{BE} \times \mathit{Stmt} \rightarrow \mathit{Stmt}$ \\
      $\mathit{skip}$ & : & $\rightarrow \mathit{Stmt}$ \\
      
      \hline
      
      $\caexpr{\cdot}$ & : & $\mathit{AE} \rightarrow \mathit{Code}$ \\
      $\cstmt{\cdot}$ & : & $\mathit{Stmt} \rightarrow \mathit{Code}$ \\
      $\cbexpr{\cdot}$ & : & $\mathit{BE} \rightarrow \mathit{Code}$ \\
      $\emptylist$ & : & $\rightarrow \mathit{Stack}$ \\
      $\cdot \lisymb \cdot$ & : & $\mathit{Code} \times \mathit{Stack} \rightarrow \mathit{Stack}$ \\
    \end{tabular}&
    \begin{tabular}{lll}
      $\emptydict$ & : & $\rightarrow \mathit{Env}$ \\
      $\mathit{\cfg{\cdot}{\cdot}}$ & : & $\mathit{Stack} \times \mathit{Env} \rightarrow \mathit{Cfg}$ \\
      \hline
      $\mathit{isInt}$ & : & $\mathit{AE} \!\rightarrow\! \mathit{Bool}$ \\
      $\mathit{isBool}$ & : & $\mathit{BE} \!\rightarrow\! \mathit{Bool}$\\
      $\cdot + \cdot$ & : & $\mathit{Int} \!\times\! \mathit{Int} \!\rightarrow\! \mathit{Int} $\\
      $\cdot = \cdot$ & : & $\mathit{Int} \!\times\! \mathit{Int} \!\rightarrow\! \mathit{Bool}$\\
      $! \cdot$ & : & $\mathit{Bool} \!\rightarrow\! \mathit{Bool}$\\
      $\mathit{lookup}$ & : & $\mathit{Id} \!\times\! \mathit{Env} \!\rightarrow\! \mathit{Int}$ \\
      $\mathit{update}$ & : & $\mathit{Id} \!\times\! \mathit{Int} \!\times\! \mathit{Env} \!\rightarrow\! \mathit{Env}$ \\
      \hline
      $\mathit{plushl}$ & : & $\mathit{AE} \!\rightarrow\! \mathit{AE}$ \\
      $\mathit{plushr}$ & : & $\mathit{AE} \!\rightarrow\! \mathit{AE}$ \\

      $\mathit{eqhl}$ & : & $\mathit{BE} \!\rightarrow\! \mathit{BE}$\\
      $\mathit{eqhr}$ & : & $\mathit{BE} \!\rightarrow\! \mathit{BE}$\\
      $\mathit{noth}$ & : & $\rightarrow\! \mathit{BE}$\\
      $\mathit{assignh}$ & : & $\mathit{Id} \!\rightarrow\! \mathit{Stmt}$ \\
      $\mathit{iteh}$ & : & $\mathit{Stmt} \!\times\! \mathit{Stmt} \!\rightarrow\! \mathit{Stmt}$ \\
    \end{tabular}
  \end{tabular}
  \end{center}
  \caption{\label{fig:function-symbols} The symbols in the signature
    $\Sigma$ used in our running example. For the infixed symbols, a
    centered dot represents an argument.}
\end{figure}
We denote by $\List{S}$ the set of ordered tuples, possibly empty,
with elements in $S$; we say that a set $T$ is \textit{$S$-indexed} if
$T = \left\{T_s \spaceVert s \in S \right\}$ is a collection of sets,
each one corresponding to a different item in $S$. For ease of
notation, we sometimes write $x \in T$ instead of $x \in T_s$ when $s$
is clear from context. A \emph{many-sorted signature} $\Sigma$ is an
ordered pair $\Sigma = (S, F)$, where $S$ is the set of sorts, and $F
= \left\{ F_{w,s} \spaceVert w \in \List{S}, s \in S \right\}$ is the
$(\List{S} \times S)$-indexed set of function symbols. If $f\in
F_{w,s}$, we say that $f$ is a \emph{function symbol} of arity
$(w,s)$. If $w = (s_1,\ldots,s_n)$, we sometimes write $f :
s_1,\ldots,s_n \rightarrow s$ instead of $f\in F_{w,s}$, which
indicates that the function symbol $f$ has arguments of sorts
$s_1,\ldots,s_n$ and a result of sort $s$ ($f \in F_{w, s}$).

A $\Sigma$-algebra is a pair $\mathcal{A} = (A, I_A)$, where $A =
\left\{ A_s \spaceVert s \in S \right\}$ is an $S$-indexed set called
the carrier set of $\mathcal{A}$ and $I_A(f)$ is a function, $I_A(f) :
A_{s_1} \times \ldots \times A_{s_n} \rightarrow A_{s}$, for all $f
\in F_{(s_1,\ldots,s_n),s}$. That is, the interpretation map $I_A$
assigns to each function symbol in $F$ a function of the appropriate
arity. For convenience, we sometimes refer to the algebra
$\mathcal{A}$ as a set, in which case we mean its carrier set $A$. We
assume as usual that $A_s \not= \emptyset$ for any $s \in S$. Given an
$S$-indexed set of symbols $\Var$, we denote by $\Term_{\Sigma,
  s}(\Var)$ the set of terms of sort $s$ built with function symbols
in $\Sigma$ and variables in $\Var$ and by $\Term_{\Sigma}(\Var)$ the
$S$-indexed set of all terms with variables in $\Var$. Given a
$\Sigma$-algebra $\mathcal{A}$ with carrier set $A = \left\{A_s
\spaceVert s \in S\right\}$, \emph{a valuation} $\rho : \bigcup_{s \in
  S}\Var_s \rightarrow \bigcup_{s \in S}A_s$ is a function that
assigns to each variable an element in $A$ of the appropriate
sort. Valuations extend homomorphically to terms as usual.
We now recall \textit{matching logic}, as introduced
in~\cite{stefanescu-ciobaca-mereuta-moore-serbanuta-rosu-2014-rta}. Fix
an algebraic signature $\Sigma = (S, F)$ with a distinguished sort
$\Cfg \in S$ called the sort of configurations, an $S$-indexed set of
variables $\Var$ and a $\Sigma$-algebra $\mathcal{T}$ with carrier set
$T$. $T$ is called the \textit{configuration model}. The elements of
the algebra $\mathcal{T}$ of sort $\Cfg$, denoted by
$\mathcal{T}_{\Cfg}$, are called \textit{configurations}. Matching
logic is a logic of program configurations.

\begin{example}
We consider a running example where the elements of $\mathcal{T}$ of
sort $\Cfg$ are programs, running in an environment, written in a
simple imperative language that we call IMP. We work in the signature
$(S, \Sigma)$, where $S = \{\mathit{Int}, \mathit{Bool}, \mathit{AE},
\mathit{BE}, \mathit{Id}, \mathit{Stmt}, \mathit{Stack}, \mathit{Env},
\mathit{Cfg}, \mathit{Code} \}$ and where the function symbols in
$\Sigma$ are presented in Figure~\ref{fig:function-symbols}. The first set of symbols is used to represent the syntax of IMP programs. The second set of symbols is required to represent configurations, which consist of a stack of code to be executed/evaluated, and an environment
mapping identifiers to integers. The third set of symbols
represents mathematical operations. The last set consists of
several auxiliary symbols, which are necessary to specify the
rules of the operational semantics. For brevity, not all operators
are presented; there are additional operations for less-than,
boolean connectives, etc.
The sorts $\mathit{Int}$ and $\mathit{Bool}$ are interpreted by
mathematical integers and booleans, respectively. The sorts
$\mathit{AE}$, $\mathit{BE}$ and $\mathit{Stmt}$ are the sorts for
arithmetic expressions, boolean expressions and statements,
respectively. The sort $\mathit{Id}$ is for program identifiers
(program variables). There are injections $\caexpr{\cdot}$,
$\cbexpr{\cdot}$ and $\cstmt{\cdot}$ from $\mathit{AE}$, $\mathit{BE}$
and $\mathit{Stmt}$, respectively, into the sort
$\mathit{Code}$. Therefore $\mathit{Code}$ refers to either arithmetic
or boolean expressions, or statements.  $\mathit{Env}$ is the sort of
maps from $\mathit{Id}$s to $\mathit{Int}$egers. The sort
$\mathit{Stack}$ refers to a stack of $\mathit{Code}$s that should be
evaluated/executed in order, starting with the top of the
stack. Configurations (of sort $\mathit{Cfg}$) consist of a
$\mathit{Stack}$ and of an environment of sort $\mathit{Env}$. The
symbols in the signature $\Sigma$ are presented in
Figure~\ref{fig:function-symbols}. It includes all function symbols
needed to represent the initial configuration, but also helper symbols
that occur during program execution.
\end{example}

\begin{example}
The SUM program introduced earlier, placed in an initial configuration
with the empty environment, $\emptydict$, is represented by the
following term of sort $\mathit{Cfg}$:
  \[\begin{array}{l}\cfg{\cstmt{\mathit{seq}(\mathit{assign(s, \intc{\mbox{$0$}})},\\
      \phantom{\langle \llbracket \mathit{seq}(}\mathit{while}(not(eq(\intc{0}, \idc{m})),
      \mathit{seq}(\mathit{assign}(s, \mathit{plus}(\idc{s},
      \idc{m})), \\
      \phantom{\langle \llbracket \mathit{seq}(\mathit{while}(not(eq(\intc{0}, \idc{m})),
      \mathit{seq}(}\mathit{assign}(m, \mathit{plus}(\idc{m},
      \intc{-1})) )))} \lisymb \emptylist}{\emptydict}.\end{array}\]
	
\end{example}

The rest of this section recalls definitions
from~\cite{stefanescu-ciobaca-mereuta-moore-serbanuta-rosu-2014-rta}.

\begin{definition}
	A \emph{matching logic formula} (or \emph{pattern}), is a
        first-order logic (FOL) formula that additionally allows terms
        in $\Term_{\Sigma, \Cfg}(\Var)$, called \emph{basic patterns},
        as atomic formulae. We recall that by $\Term_{\Sigma,
          \Cfg}(\Var)$ we denote the terms of sort $\Cfg$ in the
        $\Sigma$-algebra of terms. We say that a pattern is
        \emph{structureless} if it contains no basic patterns. More
        formally, a matching logic formula is defined as follows:
	\begin{enumerate}
	\item if $\pi \in \Term_{\Sigma, \Cfg}(\Var)$, then $\pi$ is a formula;
	\item if $w = (s_1, \ldots, s_n)$, $t_i \in \Term_{\Sigma, s_i}(\Var)$ for all $i \in \left\{1, \ldots, n\right\}$ and $P \in F_{w, \mathit{Bool}}$, then $P(t_1, \ldots, t_n)$ is a formula;
	\item if $\varphi_1$ and $\varphi_2$ are formulae, then $\varphi_1 \wedge \varphi_2$ and $\varphi_1 \vee \varphi_2$ are formulae;	
	\item if $\varphi$ is a formula, then $\neg \varphi$ is a formula;
	\item if $\varphi$ is a formula and $x \in \Var$, then $\exists x \varphi$ and $\forall x \varphi$ are formulae.
	\end{enumerate}
\end{definition}
By $\mathcal{P}_{\mathcal{T}}$ we denote the set of all patterns over an algebra $\mathcal{T}$.

\begin{definition}
  For a fixed algebra $\mathcal{T} = (A, I)$, we define \emph{satisfaction} $(\gamma,\rho) \models \varphi$ over configurations $\gamma \in \mathcal{T}_{\Cfg}$, valuations $\rho:\Var \rightarrow \mathcal{T}$ and patterns $\varphi$ as follows:
  \begin{enumerate}
	
	\item $(\gamma,\rho) \models$ $P(t_1, t_2, \ldots, t_n)$ if and only if $(I(P))(\rho(t_1), \rho(t_2), \ldots, \rho(t_n)) = \top$;
	
	\item $(\gamma,\rho) \models \pi$ iff $\gamma = \rho(\pi)$ where $\pi \in \Term_{\Sigma, \Cfg}(\Var)$;
	
	\item $(\gamma,\rho) \models (\varphi_1 \wedge \varphi_2)$ iff $(\gamma,\rho) \models \varphi_1$ and $(\gamma,\rho) \models \varphi_2$;
	
	\item $(\gamma,\rho) \models (\varphi_1 \vee \varphi_2)$ iff $(\gamma,\rho) \models \varphi_1$ or $(\gamma,\rho) \models \varphi_2$;
	
	\item $(\gamma,\rho) \models \neg \varphi$ iff $(\gamma,\rho) \; \not\models \; \varphi$;
	
	\item $(\gamma,\rho) \models \exists X \varphi$ iff $(\gamma,
          \rho') \models \varphi$ for some $\rho' : \Var \rightarrow \mathcal{T}$ with $\rho'(y) = \rho(y)$ for all $y \in Var \backslash \{X\}$;

	\item $(\gamma,\rho) \models \forall X \varphi$ iff $(\gamma,\rho) \; \not\models \; \exists X (\neg \varphi)$.

  \end{enumerate}
	
	We write $\models \varphi$ when $(\gamma, \rho) \models \varphi$ for all $\gamma \in \mathcal{T}_{\Cfg}$ and all $\rho:\Var \rightarrow \mathcal{T}$.
\end{definition}

We now recall all-path reachability logic (as presented
in~\cite{stefanescu-ciobaca-mereuta-moore-serbanuta-rosu-2014-rta}).

\begin{definition}
	A (one-path) \emph{reachability rule} is an ordered pair of patterns $(\varphi, \varphi ')$ (which can have free variables). We write this pair as $\varphi \Rightarrow ^ {\exists} \varphi '$. We say that rule $\varphi \Rightarrow ^ {\exists} \varphi '$ is \emph{weakly well-defined} iff for any $\gamma \in \mathcal{T}_{\Cfg}$ and $\rho : \Var \rightarrow \mathcal{T}$ with $(\gamma, \rho) \models \varphi$, there exists $\gamma ' \in \mathcal{T}_{\Cfg}$ such that $(\gamma ', \rho) \models \varphi '$.
\end{definition}

\begin{definition}
	A \emph{reachability system} is a set of reachability rules. A reachability system $S$ is \emph{weakly well-defined} iff each rule is weakly well-defined. $S$ induces a \emph{transition system} $(\mathcal{T}, \Rightarrow^{\mathcal{T}}_S)$ on the configuration model: $\gamma \Rightarrow^{\mathcal{T}}_S \gamma '$ for $\gamma, \gamma ' \in \mathcal{T}_{\Cfg}$ iff there is some rule $\varphi \Rightarrow^{\exists} \varphi ' \in S$ and some valuation $\rho : \Var \rightarrow \mathcal{T}$ such that $(\gamma, \rho) \models \varphi$ and $(\gamma ', \rho) \models \varphi '$. We write $\Rightarrow$ instead of $\Rightarrow^{\mathcal{T}}_S$ when it is clear from context that we are referring to a particular transition system.
\end{definition}

\begin{example}
We consider a fixed $\Sigma$-algebra $\mathcal{T}$ having the
following properties: $\mathcal{T}_{\mathit{Int}} = \mathbb{Z}$,
$\mathcal{T}_{\mathit{Bool}} = \{ \truesymb, \falsesymb \}$,
$\mathcal{T}_{\mathit{Id}} = \{x, y, z\ldots\}$, $\mathcal{T}_{a + b}
= a + b$ \textit{for all} $a, b \in \mathbb{Z}$, $\mathcal{T}_{lookup(X, update(X, I, \mathit{env}))} =
I$ \textit{for all} $X \in \mathcal{T}_{\mathit{Id}}, I \in
\mathbb{Z}$, $\mathit{env} \in \mathcal{T}_{\mathit{Env}}$,
$\mathcal{T}_{\mathit{lookup}(Y, update(X, I, \mathit{env}))} =
\mathcal{T}_{\mathit{lookup}(Y, \mathit{env})}$ \textit{for all} $Y \in
\mathcal{T}_{\mathit{Id}} \setminus \{ X \}, I \in \mathbb{Z}$,
$\mathit{env} \in \mathcal{T}_{\mathit{Env}}$, $\mathcal{T}_{\mathit{lookup}(X,
  \emptydict)} = 0$ \textit{for all} $X \in \mathcal{T}_{\mathit{Id}}$,
$\mathcal{T}_{\mathit{isInt}(x)} = \truesymb$ iff $x = \intc{y}$, for
some $y \in \mathbb{Z}$ and $\mathcal{T}_{\mathit{isBool}(x)} =
\truesymb$ iff $x = \boolc{y}$, for some $y \in \mathcal{T}_{\mathit{Bool}}$. The
weakly well-defined system $S$ defining the operational semantics of
IMP is presented in Figure~\ref{fig:imp-semantics}. For brevity,
some rules that are similar to existing rules are missing (e.g.,
the rules for $\mathit{eq}$ are similar to those for
$\mathit{plus}$). We discuss the first four rules, which define
the assignment operator and the lookup. The first rule schedules
the expression on the rhs of an assignment to be evaluated, if it
is not already an integer. Once the expression is evaluated to an
integer (using the other rules), the second rule places the result
back into the assignment operator. Once the rhs is an integer, the
third rule updates the environment appropriately. The fourth rule
evaluates a variable by looking it up in the environment. The reachability
system $S$ generates the transition relation
$\Rightarrow^{\mathcal{T}}_S$ on the model $\mathcal{T}$. Note that
reachability rules of the form $l \land \phi \Rightarrow^\exists r$
(with $l \land \phi$ and $r$ being matching logic
formulae) subsume the rewrite rules of the form $l \Rightarrow r\mbox{
  if }\varphi$ used in the introduction.

\begin{figure}[t]
  \centering
  \begin{tabular}{l}
    $\cfg{ \cstmt{\mathit{assign}(X, A)} \lisymb T }{ \mathit{env}}
    \wedge \neg\mathit{isInt(A)} \Rightarrow^{\exists} \cfg{
      \caexpr{A} \lisymb \cstmt{\mathit{assignh}(X)} \lisymb T }{
      \mathit{env}}$ \\
    
    $ \cfg{ \caexpr{\intc{a}} \lisymb \cstmt{\mathit{assignh}(X)}
      \lisymb T }{ \mathit{env}} \Rightarrow^{\exists} \cfg{
      \cstmt{\mathit{assign}(X, \intc{a})} \lisymb T }{
      \mathit{env}}$\\
    
    $\cfg{ \cstmt{\mathit{assign}(X, \intc{a})} \lisymb T }{
      \mathit{env}} \Rightarrow^{\exists} \cfg{ T }{
      \mathit{update}(X, a, \mathit{env})}$ \\
    
    $\cfg{ \caexpr{\mathit{\idc{X}}} \lisymb T }{ \mathit{env}} \Rightarrow^{\exists} \cfg{
      \caexpr{\intc{\mathit{lookup}(\mathit{X}, \mathit{env})}}
      \lisymb T }{ \mathit{env}}$ \\
    
    $\cfg{ \cstmt{skip} \lisymb T }{ \mathit{env}}
    \Rightarrow^{\exists} \cfg{ T }{ \mathit{env}}$ \\
    
    $\cfg{ \cstmt{\mathit{seq}(S_1, S_2)} \lisymb T }{ \mathit{env}}
    \Rightarrow^{\exists} \cfg{ \cstmt{S_1} \lisymb \cstmt{S_2} \lisymb T }{ \mathit{env}}$ \\
    
    $\cfg{ \cstmt{\mathit{ite}(\cbexpr{\boolc{\falsesymb}}, S_1, S_2)} \lisymb T }{ \mathit{env}}
    \Rightarrow^{\exists} \cfg{ \cstmt{S_2} \lisymb T }{ \mathit{env}}$ \\
    
    $\cfg{ \cstmt{\mathit{ite}(\cbexpr{\boolc{\truesymb}}, S_1, S_2)} \lisymb T }{ \mathit{env}}
    \Rightarrow^{\exists} \cfg{ \cstmt{S_1} \lisymb T }{ \mathit{env}}$ \\
    
    $\cfg{ \cstmt{\mathit{ite}(C, S_1, S_2)} \lisymb T }{ \mathit{env}} \wedge \neg\mathit{isBool}(C)
    \Rightarrow^{\exists} $ \\
    
    $\qquad \cfg{ \cbexpr{C} \lisymb \cstmt{\mathit{iteh}(S_1, S_2)} \lisymb T }{ \mathit{env}}$ \\
    
    $\cfg{ \cbexpr{C} \lisymb \cstmt{\mathit{iteh}(S_1, S_2)} \lisymb T }{ \mathit{env}} \wedge \mathit{isBool}(C)
    \Rightarrow^{\exists}$ \\
    
    $\qquad \cfg{ \cstmt{\mathit{ite}(C, S_1, S_2)} \lisymb T }{ \mathit{env}}$ \\
    
    $\cfg{ \cstmt{\mathit{while}(C, S)} \lisymb T }{ \mathit{env}}
    \Rightarrow^{\exists} \cfg{ \cstmt{\mathit{ite}(C, \mathit{seq}(S, \mathit{while}(C, S)), \mathit{ skip})} \lisymb T }{ \mathit{env}}$ \\
    
    $\cfg{ \caexpr{\mathit{plus}(\intc{a}, \intc{b})} \lisymb T }{ \mathit{env}}
    \Rightarrow^{\exists} \cfg{ \caexpr{\intc{a + b}} \lisymb T }{ \mathit{env}}$ \\
    
    $\cfg{ \caexpr{\mathit{plus}(A, B)} \lisymb T }{ \mathit{env}} \wedge \neg\mathit{isInt}(A)
    \Rightarrow^{\exists} \cfg{ \caexpr{A} \lisymb \caexpr{\mathit{plushl}(B)} \lisymb T }{ \mathit{env}}$ \\
    
    $\cfg{ \caexpr{\mathit{plus}(A, B)} \lisymb T }{ \mathit{env}} \wedge \mathit{isInt}(A) \wedge \neg\mathit{isInt}(B)
    \Rightarrow^{\exists}$ \\
    
    $\qquad \cfg{ \caexpr{B} \lisymb \caexpr{\mathit{plushr}(A)} \lisymb T }{ \mathit{env}}$ \\
    
    $\cfg{ \caexpr{A} \lisymb \caexpr{\mathit{plushl}(B)} \lisymb T }{ \mathit{env}} \wedge \mathit{isInt}(A)
    \Rightarrow^{\exists} \cfg{ \caexpr{\mathit{plus}(A, B)} \lisymb T }{ \mathit{env}}$ \\
    
    $\cfg{ \caexpr{B} \lisymb \caexpr{\mathit{plushr}(A)} \lisymb T }{ \mathit{env}} \wedge \mathit{isInt}(B)
    \Rightarrow^{\exists} \cfg{ \caexpr{\mathit{plus}(A, B)} \lisymb T }{ \mathit{env}}$ \\
    
  \end{tabular}
  \caption{\label{fig:imp-semantics}The reachability system $S$
    defining the semantics of IMP. Capital letters represent variables
    of the appropriate sorts. The variables $a, b$ stand for integers
    and the variable $\mathit{env}$ for an environment.}
\end{figure}

\end{example}

\begin{definition}
  A $\Rightarrow^{\mathcal{T}}_S$-\emph{execution} is a sequence
  $\gamma_0 \Rightarrow^{\mathcal{T}}_S \gamma_1
  \Rightarrow^{\mathcal{T}}_S \cdots$, potentially infinite, where
  $\gamma_0, \gamma_1, \ldots \in \mathcal{T}_{\Cfg}$. If a
  $\Rightarrow^{\mathcal{T}}_S$-execution is finite, we call it a
  $\Rightarrow^{\mathcal{T}}_S$-\emph{path}. We say that such a path
  is \emph{complete} iff it is not a strict prefix of any other
  $\Rightarrow^{\mathcal{T}}_S$-path (i.e., the last element is
  irreducible).
\end{definition}

The following is an example of a complete $\TStrans{S}$-path:
\begin{center}
	$\cfg{\cstmt{\mathit{seq}(\mathit{skip}, \mathit{skip})} \lisymb \emptylist}{\emptydict} \Rightarrow \cfg{\cstmt{\mathit{skip}} \lisymb \cstmt{\mathit{skip}} \lisymb \emptylist}{\emptydict} \Rightarrow
	\cfg{\cstmt{\mathit{skip}} \lisymb \emptylist}{\emptydict} \Rightarrow
	\cfg{\emptylist}{\emptydict}$.
\end{center}

\begin{definition}[Partial Correctness]
  \label{def:partial-correctness}
  An \emph{all-path reachability rule} is a pair $\varphi \aptrans
  \varphi'$. We say that $\varphi \aptrans \varphi'$ is
  \emph{satisfied} by $S$, denoted by $S \models \varphi \aptrans
  \varphi'$, iff for all complete $\TStrans{S}$-paths $\tau$ starting
  with $\gamma \in \mathcal{T}_{\Cfg}$ and for all $\rho : \Var
  \rightarrow \mathcal{T}$ such that $(\gamma, \rho) \models \varphi$,
  there exists some $\gamma' \in \tau$ such that $(\gamma', \rho)
  \models \varphi'$.
\end{definition}

The definition above generalizes typical partial correctness of Hoare
tuples of the form $\{ \varphi \} \mathit{P} \{ \varphi' \}$, as the
reachability formula $\mathit{P} \land \varphi \Rightarrow^\forall
\cfg{\mathit{skip}}{\mathit{env}} \land \varphi'$ can be used
instead. See~\cite{stefanescu-ciobaca-mereuta-moore-serbanuta-rosu-2014-rta}
for a more detailed discussion. Reachability logic has a sound and
relatively complete proof system, which derives sequents of the form
$S \vdash \varphi \Rightarrow^\forall \varphi'$ if and only if $S
\models \varphi \Rightarrow^\forall \varphi'$ holds. The results in
the present paper do not depend on the proof system, and therefore the
proof system is presented in Appendix~\ref{app:proofsystem}.

\section{The Reduction of Total Correctness to Partial Correctness}

\label{sec:reduction}

We now present a transformation that reduces total correctness to the
problem of partial correctness. We first define what it means for a
pattern to terminate.

\begin{definition}[Termination of a Pattern]
  We say that a pattern $\varphi$ \emph{terminates} in $S$ if for all
  $\gamma \in \mathcal{T}_{\Cfg}$ and all $\rho : \Var \rightarrow
  \mathcal{T}$ such that $(\gamma, \rho) \models \varphi$, all
  executions $\gamma \Rightarrow \gamma_1 \Rightarrow \gamma_2
  \Rightarrow \cdots$ from $\gamma$ in $(\mathcal{T},
  \Rightarrow^{\mathcal{T}}_{S})$ are finite.
\end{definition}

\begin{example}

  The following pattern does \emph{not} terminate in
  $S$: \[\cfg{\cstmt{\mathit{while}(C, \mathit{skip})} \lisymb
    \emptylist}{\emptydict}\mbox{, where }C \in \Var_{\mathit{BE}}.\]
  Its nontermination is witnessed by the following execution:
  \[\begin{array}{c}
    \cfg{\cstmt{\mathit{while}(\cbexpr{\boolc{\truesymb}}, \mathit{skip})} \lisymb \emptylist}{\emptydict} \TStrans{S} \\
    \cfg{ \cstmt{\mathit{ite}(\cbexpr{\boolc{\truesymb}}, \mathit{seq}(\mathit{skip}, \mathit{while}(\cbexpr{\boolc{\truesymb}}, \mathit{skip})), \mathit{skip})} \lisymb T }{ \emptydict}  \TStrans{S} \cdots \\
    \cfg{\cstmt{\mathit{while}(\cbexpr{\boolc{\truesymb}}, \mathit{skip})} \lisymb \emptylist}{\emptydict}
    \TStrans{S} \cdots
  \end{array}\]
\end{example}

The next definition is at the core of our proof. It is the
total-correctness counterpart to
Definition~\ref{def:partial-correctness}.

\begin{definition}[Total Correctness]
  \label{def:total-correctness}
  We say that an all-path reachability rule $\varphi \aptrans
  \varphi'$ is \emph{totally satisfied} by $S$, denoted by $S
  \models_t \varphi \aptrans \varphi'$, iff for all complete or
  diverging $\TStrans{S}$-executions $\tau$ starting with $\gamma \in
  \mathcal{T}_{\Cfg}$ and for all $\rho : \Var \rightarrow
  \mathcal{T}$ such that $(\gamma, \rho) \models \varphi$, there
  exists some $\gamma' \in \tau$ such that $(\gamma', \rho) \models
  \varphi'$.
\end{definition}

We now discuss how the definition above generalizes the usual
definition for total correctness found in the literature.

A Hoare tuple $\{ \phi \} P \{ \phi' \}$ is valid in the sense of
total correctness if the precondition $\phi$
entails \begin{enumerate} \item the termination of the program $P$,
  and also \item that the postcondition $\phi'$ holds after the
  program $P$ terminates. \end{enumerate}

Our definition of $S \models_t \varphi \Rightarrow^\forall \varphi'$
states that any execution starting from $\varphi$, terminating or not,
reaches at some point $\varphi'$. If we choose $\varphi'$ to be a
configuration that is known to terminate (e.g., for the case of IMP,
$\cfg{\mathit{skip} \lisymb \emptylist}{\ldots}$), then it follows
that $\varphi$ must terminate along all paths. Otherwise, any
nonterminating path starting with $\varphi$ would meet $\varphi'$,
which terminates, leading to a contradiction.

In particular, the total correctness of the Hoare tuple $\{ \phi \} P \{ \phi'\}$ is encoded by $S \models_t P
\land \phi \Rightarrow^\forall \cfg{\mathit{skip} \lisymb
  \emptylist}{\mathit{env}} \land \phi'$. In addition to encoding
total correctness Hoare tuples, our definition of total correctness is
strictly more general, since it guarantees that $\varphi'$ is reached
in a finite number of steps from $\varphi$, even if $\varphi$ does not
terminate.

We now present our transformation $\ext$, which helps reduce total
correctness guarantees of the form $S \models_t \varphi \aptrans
\varphi'$ to partial correctness sequents of the form $\ext(S) \vdash
\ext(\varphi, s) \aptrans \exists M . \ext(\varphi', M)$, where $\ext$ transforms its
arguments as explained in Theorem~\ref{th:main} below.

\begin{definition}[Reduction From Total Correctness to Partial
  Correctness]%
  \label{def:transformation}
  We define several homonymous maps $\ext$ that encode our
  transformation for reducing total correctness to partial
  correctness. By $\mathcal{S}_{\Sigma}$ we denote the class of all
  algebraic signatures, by $\mathcal{S}$ the class of all sorts and by
  $\mathcal{U}$ the class of all algebras with distinguished sets of
  configurations.
  
\end{definition}

\begin{enumerate}
  
\item \emph{Transforming signatures} \hfill ($\ext : (\mathcal{S}_{\Sigma} \times \mathcal{S}) \rightarrow
  (\mathcal{S}_{\Sigma} \times \mathcal{S})$)

  Let $\Sigma = (S, F)$ be an algebraic signature and $\Cfg \in S$.
  We define $\ext(\Sigma, \Cfg) = (\Sigma', \Cfg'),$ where $\Sigma'
  = (S \cup \Nat \cup \Cfg', \allowbreak F \cup \left\{F_{(), \Nat},
  F_{(\Cfg, \Nat), \Cfg'}, \allowbreak F_{(\Nat, \Nat), \Nat}\right\})$ and where
  $F_{(), \Nat} = \left\{0,1,2, \ldots, \right\}, \allowbreak F_{(\Cfg, \Nat), \Cfg'}
  = \left\{(\:,)\right\}, \allowbreak F_{(\Nat, \Nat), \Nat} =
  \left\{+,-,\times,/\right\}.$
	
  Intuitively, $\theta$ adds a sort for the set of naturals and
  changes the configuration sort such that new configurations consist
  of old configurations, plus a natural number. The natural
  intuitively represents a program variant that is added to the
  configuration, i.e. the maximum number of steps the program can take
  before ending its execution. In addition to the standard operations
  $+, -, \times, /$, we may also consider other operations like
  $|\cdot| : \mathit{Int} \to \Nat$ (absolute value) that operate on
  $\Nat$ and other existing sorts. Alternatively, we could consider
  any well-founded set instead of the set of naturals; however,
  naturals make the presentation easier to follow.

\item \emph{Transforming algebras} \hfill ($\ext : \mathcal{U}
  \rightarrow \mathcal{U}$)

  Let $\mathcal{A} = (A, I_A)$ be a $\Sigma$-algebra, where $\Cfg$ is
  the distinguished sort of configurations and assume $\ext(\Sigma,
  \Cfg) = (\Sigma', \Cfg')$. Then $\ext(\mathcal{A}) = (A', I'_A)$ is
  a $\Sigma'$-algebra with a distinguished sort $\Cfg'$ defined as
  follows:
  \begin{enumerate}
  \item $A \subseteq A'$;
  \item $\mathbb{N} = A'_{\Nat} \in A'$;
  \item $I'_A$ is an extension of $I_A$ such that $\mathcal{T}(n) =
    n_{\mathbb{N}}$, $\mathcal{T}(a \delta b) = \mathcal{T}(a
    \delta_{\mathbb{N}} b)$ for $\delta \in \left\{+, -, \times,
    /\right\}$;
  \item $\mathcal{T}(\Cfg') = \mathcal{T}(\Cfg) \times \mathbb{N}$.
  \end{enumerate}
  Intuitively, each $\Sigma$-algebra with a distinguished sort $\Cfg$
  of configurations is transformed into a $\Sigma$-algebra with a
  distinguished sort $\Cfg'$. The sort $\Cfg'$ is interpreted as pairs
  of old configurations and naturals.

\item \emph{Transforming patterns (matching logic formulae)} \hfill ($\ext :
  (\mathcal{P}_{\mathcal{T}} \times \Term_{\Sigma,\Nat}(\Var))
  \rightarrow \mathcal{P}_{\mathcal{\ext(T)}}$)

  Consider a $\Sigma$-algebra $\mathcal{T}$. Let $\varphi$ be a
  pattern over $\mathcal{T}$ and $n \in \Term_{\Sigma,\Nat}(\Var)$
  ($n$ is a term of sort $\Nat$). We define $\ext$ as follows:
  \begin{enumerate}
  \item if $\varphi$ is structureless, then $\ext(\varphi, n) =
    \varphi$;
  \item if $\varphi$ is a basic pattern, then $\ext(\varphi, n) =
    (\varphi, n)$ (note that this is the interesting case, as in the
    other cases the transformation $\ext$ simply applies
    homomorphically);
  \item if $\varphi = (\varphi_1 \: \delta \: \varphi_2)$ and
    $\varphi$ is not structureless, then $\ext(\varphi, n) =
    \ext(\varphi_1, n) \: \delta \: \ext(\varphi_2, n)$, for $\delta
    \in \left\{ \vee, \wedge \right\}$;
  \item if $\varphi = \delta X(\varphi')$ and $\varphi$ is not
    structureless, then $\ext(\varphi, n) = \delta X \ext(\varphi',
    n)$, for $\delta \in \left\{ \exists, \forall \right\}$;
  \item if $\varphi = \neg\varphi'$ and $\varphi$ is not
    structureless, then $\ext(\varphi, n) = \neg\ext(\varphi', n)$.
  \end{enumerate}
  Intuitively, $\theta$ transforms each old basic pattern into a new
  basic pattern by adding the natural $n$ and ``propagates'' this change
  for all basic patterns contained in the given pattern.
  
\item \emph{Transforming one-path reachability rules} \hfill ($\ext :
  (\mathcal{P}_{\mathcal{T}} \times \mathcal{P}_{\mathcal{T}})
  \rightarrow (\mathcal{P}_{\mathcal{\ext(T)}} \times
  \mathcal{P}_{\mathcal{\ext(T)}})$)

  Let $\varphi \Rightarrow^\exists \varphi'$ be a reachability
  rule. Then $\ext(\varphi \Rightarrow^\exists \varphi') =
  \ext(\varphi, n) \Rightarrow^\exists \ext(\varphi', n - 1)$, where
  $n$ is a fresh variable of sort $\mathit{Nat}$. The transformation
  forces each rule to decrease the program variant (by $1$).
  
\item \emph{Transforming language semantics} \hfill ($\ext :
  2^{(\mathcal{P}_{\mathcal{T}} \times \mathcal{P}_{\mathcal{T}})}
  \rightarrow 2^{(\mathcal{P}_{\mathcal{\ext(T)}} \times
    \mathcal{P}_{\mathcal{\ext(T)}})}$)

  We define the transformation by $\ext(S) = \left\{ \ext(\varphi
  \Rightarrow^\exists \varphi') \spaceVert (\varphi
  \Rightarrow^\exists \varphi') \in S \right\}.$ Each one-path
  reachability rule is transformed independently.
  
\end{enumerate}

We now reduce the problem of total correctness to partial
correctness. This is achieved by the following property of the
transformation $\theta$ defined previously:

\begin{theorem}
  \label{th:main}
  If there exists some term $s \in \Term_{\Sigma, \Nat}(\Var)$
  of sort $\mathit{Nat}$ such that \[\ext(S) \models
  \ext(\varphi, s) \aptrans \exists M.\ext(\varphi', M),\] where
  $M \in \Var_\Nat$, then  $S \models_t \varphi \aptrans \varphi'.$
\end{theorem}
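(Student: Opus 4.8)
The plan is to prove the implication directly. Fix an arbitrary complete or diverging $\TStrans{S}$-execution $\tau = \gamma_0 \TStrans{S} \gamma_1 \TStrans{S} \cdots$ with $\gamma_0 = \gamma$, and an arbitrary $\rho$ with $(\gamma, \rho) \models \varphi$; the goal is to exhibit some $\gamma_i \in \tau$ with $(\gamma_i, \rho) \models \varphi'$. The idea is to \emph{lift} $\tau$ into a single execution of the transformed system $\ext(S)$ by attaching a decreasing counter, feed that lifted execution into the partial-correctness hypothesis, and then \emph{project} the resulting witness back down to $\tau$.

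First I would establish a step-correspondence lemma for the transformed transition relation: because every rule of $\ext(S)$ has the shape $\ext(\psi, n) \optrans \ext(\psi', n-1)$ with $n$ a fresh $\Nat$-variable, a transition $(\delta_1, m_1) \ETStrans{\ext(S)} (\delta_2, m_2)$ holds in $\ext(\mathcal{T})$ exactly when $\delta_1 \TStrans{S} \delta_2$ in $\mathcal{T}$ and $m_2 = m_1 - 1$ (forcing $m_1 \ge 1$). Two consequences follow. On one hand the counter strictly decreases along every $\ETStrans{\ext(S)}$-execution, and since it ranges over $\mathbb{N}$ it cannot decrease indefinitely, so every $\ETStrans{\ext(S)}$-execution is finite. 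On the other hand, given the variant value $N \in \mathbb{N}$, I can lift a prefix of $\tau$ to $(\gamma_0, N) \ETStrans{\ext(S)} (\gamma_1, N-1) \ETStrans{\ext(S)} \cdots$; this follows $\tau$ for exactly $\min(N, |\tau|)$ steps and then halts, either because $\tau$ has reached an $S$-irreducible configuration (which lifts to an $\ext(S)$-irreducible one) or because the counter has reached $0$ (whence no further $\ext(S)$-step exists). In both subcases the lifted execution is a \emph{complete} $\ETStrans{\ext(S)}$-path, which is the object the partial-correctness hypothesis expects.

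Next I would prove a satisfaction-correspondence lemma by structural induction on patterns: for every pattern $\psi$, every $s \in \Term_{\Sigma,\Nat}(\Var)$, every $\delta \in \mathcal{T}_{\Cfg}$ and every valuation $\sigma$ into $\ext(\mathcal{T})$, one has $((\delta, \sigma(s)), \sigma) \models \ext(\psi, s)$ iff $(\delta, \sigma) \models \psi$. The Boolean, quantifier and negation cases are immediate from the homomorphic definition of $\ext$ together with the conservativity of the algebra extension $A \subseteq A'$ on the original sorts; the only real case is a basic pattern $\pi$, where $\ext(\pi, s) = (\pi, s)$ and satisfaction unfolds to the single equality $\delta = \sigma(\pi)$ on both sides, the $\sigma(s)$-component matching automatically.

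Finally I would assemble the argument. Extend $\rho$ to a valuation $\rho'$ into $\ext(\mathcal{T})$ on the $\Nat$-variables occurring in $s$ and put $N = \rho'(s) \in \mathbb{N}$; since $\varphi$ and $\varphi'$ do not mention these variables, $(\gamma_0, \rho') \models \varphi$ still holds, and by the satisfaction-correspondence lemma $((\gamma_0, N), \rho') \models \ext(\varphi, s)$. Applying the hypothesis $\ext(S) \models \ext(\varphi, s) \aptrans \exists M.\ext(\varphi', M)$ to the complete lifted path from $(\gamma_0, N)$ yields an index $i$ with $((\gamma_i, N-i), \rho') \models \exists M.\ext(\varphi', M)$; taking the witness for $M$ to be $N - i$ and invoking the satisfaction-correspondence lemma once more (with $s := M$) gives $(\gamma_i, \rho') \models \varphi'$, and hence $(\gamma_i, \rho) \models \varphi'$ because $M$ is fresh and the $\Nat$-variables do not occur in $\varphi'$. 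As $\gamma_i$ is one of the first $\min(N, |\tau|)+1$ configurations of $\tau$, this is the required witness. I expect the main obstacle to be the halting analysis inside the step-correspondence lemma: one must argue that the lifted execution is \emph{complete} and not merely finite, i.e. that hitting counter $0$ blocks every rule (the decremented value $0-1$ is not a natural, so no target exists in $\ext(\mathcal{T})_{\Cfg'}$) and that an $S$-irreducible configuration lifts to an $\ext(S)$-irreducible one; a secondary subtlety is the bookkeeping of valuations across the two algebras, ensuring that extending $\rho$ over the new $\Nat$-variables disturbs neither the satisfaction of the original patterns nor the indexing $m_2 = m_1 - 1$ along the lifted path.
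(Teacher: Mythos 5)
Your proposal is correct and follows essentially the same route as the paper's proof: lift the given $\TStrans{S}$-execution to a complete $\ETStrans{\ext(S)}$-path by attaching the counter $N=\rho(s)$, where your single $\min(N,|\tau|)$ truncation corresponds exactly to the paper's two cases ($\tau$ complete with at most $N$ steps, versus $\tau$ longer than $N$ steps, where the path halts at counter $0$), then apply the partial-correctness hypothesis and project the witness back. The only differences are presentational: you argue directly rather than by contradiction, and you isolate as explicit lemmas (step-correspondence and satisfaction-correspondence) the facts the paper dismisses as ``easy to see,'' which is a strengthening rather than a divergence.
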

\begin{proof}
	
  Suppose there exist some valuation $\rho : \Var \rightarrow
  \mathcal{\ext(T)}$, some configuration $\gamma \in
  \mathcal{T}_{\Cfg}$ with the property $(\gamma, \rho) \models
  \varphi$ and a complete or diverging $\TStrans{S}$-execution $\tau =
  \gamma \TStrans{S} \gamma_1 \TStrans{S} \cdots $ such that there is
  no $\gamma'$ in $\tau$ for which $(\gamma', \rho) \models \varphi'$.
  Let $n = \rho(s)$. As $\ext(S) \models \ext(\varphi, s) \aptrans
  \exists M.\ext(\varphi', M)$, we have, by definition, that for all
  complete $\ETStrans{\ext(S)}$-paths $\tau^{\theta} =
  (\gamma^{\theta}, n) \ETStrans{\ext(S)} (\gamma_1^{\theta}, n - 1)
  \ETStrans{\ext(S)} \cdots \ETStrans{\ext(S)} (\gamma_k^{\theta}, n -
  k)$ such that $((\gamma^{\theta}, n), \rho) \models \ext(\varphi,
  s)$, there exists some $(\gamma_p^{\theta}, n - p)$ in
  $\tau^{\theta}$ such that $((\gamma_p^{\theta}, n - p), \rho)
  \models \exists M.\ext(\varphi', M)$.
	
  We distinguish two cases. First, suppose $\tau$ is complete and has
  at most $n$ steps. Consider the path $\tau^{\theta} = (\gamma, n)
  \ETStrans{\ext(S)} (\gamma_1, n - 1) \ETStrans{\ext(S)} \cdots
  \ETStrans{\ext(S)} (\gamma_k, n - k)$. It is easy to see that since
  $\tau$ has at most $n$ steps, $\tau^{\theta}$ is indeed a valid
  $\ETStrans{\ext(S)}$-path. Moreover, since $\tau$ is complete, it is
  easy to see that $\tau^{\theta}$ is also complete. It follows that
  there exists some $(\gamma_p, n - p)$ in $\tau^{\theta}$ such that
  $((\gamma_p, n - p), \rho) \models \exists M.\ext(\varphi',
  M)$. By the definition of satisfaction, this statement implies that
  $(\gamma_p, \rho) \models \varphi'$ and we have obtained a
  contradiction.
	
  For the second case, we have that $\tau$ has more than $n$
  steps. Consider the prefix of $\tau$ of $n$ steps: $\tau' = \gamma
  \TStrans{S} \gamma_1 \TStrans{S} \cdots \TStrans{S}
  \gamma_n$. Consider the $\ETStrans{\ext(S)}$-path $\tau'' = (\gamma,
  n) \ETStrans{\ext(S)} (\gamma_1, n - 1) \ETStrans{\ext(S)} \cdots
  \ETStrans{\ext(S)} (\gamma_n, 0)$. Note that $\tau''$ is indeed a
  valid path in $\ext(S)$ and additionally $((\gamma, n), \rho)
  \models \ext(\varphi, s)$. Moreover, $\tau''$ is complete since
  $(\gamma_n, 0)$ cannot advance in $\ETStrans{\ext(S)}$.
	
  This means that $((\gamma_p, n - p), \rho) \models \exists
  M.\ext(\varphi', M)$ for some value $p$. It is easy to see from the
  definition of satisfaction that this last statement implies
  $(\gamma_p, \rho) \models \varphi'$. Since $\ext(\gamma_p, 0)$ is in
  $\tau''$, then $\gamma_p$ is in $\tau'$, which obviously implies
  that $\gamma_p$ is in $\tau$ as well. Therefore, there exists
  $\gamma_p$ in $\tau$ for which $(\gamma_p, \rho) \models \varphi'$.

  We have arrived at a contradiction in both cases, from which we draw
  the conclusion that for all complete or diverging
  $\TStrans{S}$-paths $\tau$ starting with $\gamma \in
  \mathcal{T}_{\Cfg}$ such that $(\gamma, \rho) \models \varphi$,
  there exists some $\gamma'$ in $\tau$ such that $(\gamma', \rho)
  \models \varphi'$. By definition, this means that $S \models(\varphi
  \aptrans_t \varphi')$, which is what we had to prove.
\end{proof}

\begin{corollary}
	If there exists $s \in \Term_{\Sigma, \Nat}(\Var)$
        of sort $\mathit{Nat}$ such that $\ext(S) \models
        \ext(\varphi, s) \aptrans \exists M.\ext(\varphi', M)$, where
        $M \in \Var_\Nat$, then:
	\begin{enumerate}
		\item $S \models \varphi \aptrans \varphi'$;
		\item If $\varphi'$ terminates in $S$, then $\varphi$ also terminates in $S$.
	\end{enumerate}
\end{corollary}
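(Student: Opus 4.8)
The plan is to obtain both conclusions as immediate consequences of Theorem~\ref{th:main}, whose hypothesis is literally the hypothesis of the corollary. So the first step is to invoke the theorem and record that $S \models_t \varphi \aptrans \varphi'$: along every complete or diverging $\TStrans{S}$-execution that starts from a configuration satisfying $\varphi$, the pattern $\varphi'$ is met at some point. Everything else is a short argument from the relevant definitions, and at no point do I need to reopen the transformation $\ext$ or the bound-tracking construction used inside the proof of the theorem.

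For the first conclusion I would compare the two notions of satisfaction. Partial correctness (Definition~\ref{def:partial-correctness}) quantifies only over \emph{complete} $\TStrans{S}$-paths, whereas total correctness (Definition~\ref{def:total-correctness}) quantifies over all \emph{complete or diverging} executions. Since a complete path is in particular a complete-or-diverging execution, the total-correctness guarantee specializes directly: for any complete path $\tau$ from $\gamma$ with $(\gamma, \rho) \models \varphi$, total correctness already supplies some $\gamma' \in \tau$ with $(\gamma', \rho) \models \varphi'$. This is exactly the condition defining $S \models \varphi \aptrans \varphi'$, so the first item needs no further work.

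For the second conclusion I would argue by contradiction. Assume $\varphi'$ terminates in $S$ but $\varphi$ does not. By the definition of termination, non-termination of $\varphi$ yields a configuration $\gamma$ and a valuation $\rho$ with $(\gamma, \rho) \models \varphi$ admitting an infinite execution $\tau = \gamma \TStrans{S} \gamma_1 \TStrans{S} \gamma_2 \TStrans{S} \cdots$. This $\tau$ is a diverging execution, so total correctness applies to it and produces an index $p$ with $(\gamma_p, \rho) \models \varphi'$. But the suffix $\gamma_p \TStrans{S} \gamma_{p+1} \TStrans{S} \cdots$ is itself an infinite execution starting from $\gamma_p$, which contradicts the termination of $\varphi'$, since that hypothesis forces every execution out of a configuration satisfying $\varphi'$ to be finite. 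Hence $\varphi$ terminates in $S$.

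The main obstacle, modest as it is, sits in this second part: I must ensure that the witness $\gamma_p$ delivered by total correctness genuinely lies on the diverging execution $\tau$, so that its suffix really is an infinite execution out of $\gamma_p$ rather than some unrelated finite path. This is guaranteed precisely because Definition~\ref{def:total-correctness} ranges over diverging executions (and not merely complete paths), which is also what makes it strictly stronger than partial correctness. Once this point is secured, the clash with the termination of $\varphi'$ is immediate and both items follow.
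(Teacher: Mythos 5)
Your proposal is correct and matches the paper's intent exactly: the paper states this corollary without an explicit proof, treating it as an immediate consequence of Theorem~\ref{th:main}, and your argument spells out precisely that derivation (item~1 by restricting the total-correctness quantification to complete paths, item~2 by contradiction using a diverging execution and the suffix from the witness $\gamma_p$). No gaps.
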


The converse of the corollary above, stating that if a partial
correctness guarantee holds and $\varphi$ terminates then the total
correctness guarantee holds as well, in the cases of
finitely-branching transition systems (this is an immediate
consequence of König's lemma). Given the program SUM in our running
example and the semantics $S$ of IMP, the following sequent can be
derived:

\[\begin{array}{l} \ext(S) \vdash (\cfg{\textit{SUM}}{\mathit{env}_1}, 200|z| + 200) \land
    \mathit{lookup}(m, \mathit{env}_1) = z \land z \geq 0
    \Rightarrow^{\forall} \\ \qquad \exists
    M,\mathit{env}_2.((\cfg{\mathit{skip}}{\mathit{env}_2}, M)\land
    \mathit{lookup}(s, \mathit{env}_2) = z(z+1)/2),\end{array}\]
\noindent which proves the total correctness of \textit{SUM}. A fully
worked out example of a proof of total correctness is given in
Appendix~\ref{app:full-example}.

\section{Related Work}

\label{sec:related}

We critically rely on previous work on language-parametric partial
program correctness, as developed
in~\cite{stefanescu-ciobaca-mereuta-moore-serbanuta-rosu-2014-rta}. Starting
with the operational semantics of the language of the program for
which we prove total correctness, we transform it into an (artificial)
language whose configurations consist of the configurations of the
initial language, plus a variant. This construction is
automated. Given a program and a program variant, its total
correctness in the original language reduces to showing partial
correctness in the new language. Language transformations have been
used before, for example to develop language-parametric symbolic
execution engines~\cite{JSC2016} or language-parametric partial
equivalence checkers~\cite{DBLP:conf/synasc/Ciobaca14}.

In general, the research community treats the subject of termination
orthogonally to the subject of partial correctness. There are several
automated approaches to proving (and certifying) termination
(e.g.,~\cite{Giesl2017,ceta,contejean07frocos,ALARCON2007105}), but
these are typically only concerned with termination, and not
correctness. Therefore, to establish total correctness we generally
first establish partial correctness by using various Hoare-like logics
(e.g.,~\cite{Cao2018,stefanescu-park-yuwen-li-rosu-2016-oopsla}), and
then termination using a specialized termination prover
(e.g.,~\cite{Giesl2017,ALARCON2007105}).

Logics that prove total correctness directly
(e.g.,~\cite{key-ifm2017,da2016modular}) are used more rarely. This is
despite the fact that relatively recent work in automated termination
proving
(e.g.,~\cite{termination-cav2013,variance-popl2007,ramsey-tacas2013,t2-tacas2016,terminator-pldi2006})
shows that it is beneficial to use information obtained by proving a
program (e.g., invariants) in the termination argument:
in~\cite{termination-cav2013}, a cooperation graph is used to enable
the cooperation between a safety prover and the rank synthesis tool,
in~\cite{variance-popl2007}, a variance analysis is introduced that is
parametric in an invariance analysis and Ramsey-based termination
arguments are improved with lexicographic ordering
in~\cite{ramsey-tacas2013}.

\section{Conclusion and Future Work}

\label{sec:conclusion}

We have developed a language semantics transformation that can be used
to prove total correctness of programs. The method can be used for any
programming language whose operational semantics is given by a set of
reachability rules. This is not a restriction, as any programming
language~\cite{DBLP:journals/iandc/SerbanutaRM09} can be faithfully
encoded as such. Moreover, our definition of total correctness
(Definition~\ref{def:total-correctness}) generalizes the usual
definition of total correctness, as it can also be used to reason
about nonterminating programs that are guaranteed to reach a desired
configuration (which could be nonterminating) in a finite number of
steps. We have implemented our approach in the RMT
tool~\cite{rmt,ijcar-2018}. Instructions on obtaining RMT are
available at at
\url{http://profs.info.uaic.ro/~stefan.ciobaca/wpte2018}, along with
several examples for total correctness (including our running
example). Our examples show that our approach works in practice, but
in future work we must also benchmark realistic languages with
reachability logic semantics such as C
(see~\cite{ellison-rosu-2012-popl}) or Java
(see~\cite{DBLP:conf/popl/BogdanasR15}). A limitation of our approach
is that the number of steps has to be computable upfront. This means
that we cannot handle programs that nondeterministically choose a
value and loop for that number of steps. Another limitation is that
the upper bound is not found automatically (even in simple cases), it
has to be provided by the user.

There remain many exciting open questions for future work. The main
question is proving our reduction to be complete. We will also study
how our notion of total correctness corresponds to the well-known
notions of may-convergence and must-convergence in the literature on
process algebra (e.g.,in~\cite{schmidt2010closures}). Another open
question is whether our generalization of the notion of total
correctness has any practical advantages over the usual definition. In
our present approach, the program variant must be a natural number,
but an important question is to analyze whether other well-founded
orders could be needed as well. Another open question is
compositionality: instead of providing a program-wide variant, would
it be possible to have a more modular approach? Finally, can we
combine our method with existing state of the art automated
termination provers like~\cite{t2-tacas2016,terminator-pldi2006} to
obtain the benefits of both?

\section*{Acknowledgement}

This work is funded by the Ministry of Research and Innovation within
Program 1 – Development of the national RD system, Subprogram 1.2 –
Institutional Performance – RDI excellence funding projects, Contract
no.34PFE/19.10.2018.

\bibliographystyle{eptcs}
\bibliography{refs}

\clearpage

\appendix

\section{Proof System for Partial Correctness}

\label{app:proofsystem}

We recall in
Figure~\ref{fig:proofsystemrta2014} the proof system for the problem
of partial correctness
from~\cite{stefanescu-ciobaca-mereuta-moore-serbanuta-rosu-2014-rta}.

Matching logic formulae can be translated into FOL formulae such that
matching logic satisfaction reduces to FOL satisfaction in the model
of configurations $\mathcal{T}$. This allows conventional theorem
provers to be used for matching logic reasoning. One of the proof
rules of reachability logic depends on this translation.

\begin{definition}
Let $\square$ be a fresh variable of sort $\Cfg$. For a pattern
$\varphi$, let $\varphi^{\square}$ be the FOL formula formed from
$\varphi$ by replacing basic patterns $\pi \in \Term_{\Sigma,
  \Cfg}(\Var)$ with equalities $\square = \pi$. If $\rho : \Var
\rightarrow \mathcal{T}$ and $\gamma \in \mathcal{T}_{\Cfg}$ then let
the valuation $\rho^{\gamma} : \Var \cup \left\{\square\right\}$ be
such that $\rho^{\gamma}(x) = \rho(x)$ for all $x \in \Var$ and
$\rho^{\gamma}(\square) = \gamma$.
\end{definition}

We have that
\begin{center}
  $(\gamma, \rho) \models \varphi \iff \rho^{\gamma} \models
  \varphi^{\square}$.
\end{center}

We use $\varphi[c/\square]$ to denote the FOL formula resulting from
eliminating $\square$ from $\varphi$ and replacing it with a $\Cfg$
variable $c$.

\begin{figure}[t]
\begin{mathpar}

  \inferrule[Step]
	    {\models \varphi \rightarrow \bigvee\limits_{\varphi_l \Rightarrow^{\exists} \varphi_r \in S} \exists \mathit{FreeVars(\varphi_l)} \varphi_l \\\\
	      \models \exists c (\varphi[c/\square] \wedge \varphi_l[c/\square]) \wedge \varphi_r \rightarrow \varphi' \textrm{ for all } \varphi_l \optrans \varphi_r \in S}
	    {\bigTab \mathcal{S, A} \vdash_\mathcal{C} \varphi \Rightarrow^{\forall} \varphi'}
            
\and
            
  \inferrule[Axiom]
	    {\varphi \aptrans \varphi' \in \mathcal{A}}
	    {\mathcal{S, A} \vdash_\mathcal{C} \varphi \aptrans \varphi'}

\and
            
\inferrule[Transitivity]
	{\mathcal{S, A} \vdash_\mathcal{C} \varphi_1 \aptrans \varphi_2 \\
		\mathcal{S, A} \cup \mathcal{C} \vdash \varphi_2 \aptrans \varphi_3}
	{\mathcal{S, A} \vdash_\mathcal{C} \varphi_1 \aptrans \varphi_3}

\and

\inferrule[Case Analysis]
	{\mathcal{S, A} \vdash_\mathcal{C} \varphi_1 \aptrans \varphi \\
		\mathcal{S, A} \vdash_\mathcal{C} \varphi_2 \aptrans \varphi}
	{\mathcal{S, A} \vdash_\mathcal{C} \varphi_1 \vee \varphi_2 \aptrans \varphi}

\and

\inferrule[Circularity]
	{\mathcal{S, A} \vdash_\mathcal{C \cup \left\{\varphi \aptrans \varphi' \right\} } \varphi \aptrans \varphi'}
	{\mathcal{S, A} \vdash_\mathcal{C} \varphi \aptrans \varphi'}

\and

\inferrule[Abstraction]
	{\mathcal{S, A} \vdash_\mathcal{C} \varphi \aptrans \varphi' \\
		X \cap \mathit{FreeVars}(\varphi') = \emptyset}
	{\mathcal{S, A} \vdash_\mathcal{C} \exists X \varphi \aptrans \varphi'}

\and

\inferrule[Reflexivity]
	{.}
	{\mathcal{S, A} \vdash_\mathcal{C} \varphi \aptrans \varphi}

\and

\inferrule[Consequence]
	{\models \varphi_1 \rightarrow \varphi_1' \\
		\mathcal{S, A} \vdash_\mathcal{C} \varphi_1' \aptrans \varphi_2' \\
		\models \varphi_2' \rightarrow \varphi_2}
	{\mathcal{S, A} \vdash_\mathcal{C} \varphi_1 \aptrans \varphi_2}

\end{mathpar}
\caption{\label{fig:proofsystemrta2014}The language-parametric proof
  system for partial correctness
  in~\cite{stefanescu-ciobaca-mereuta-moore-serbanuta-rosu-2014-rta}}
\end{figure}

The proof system was shown
in~\cite{stefanescu-ciobaca-mereuta-moore-serbanuta-rosu-2014-rta} to
be sound (and also relatively complete) for the problem of partial
correctness. Note that, this provides no guarantees for configurations
that do not terminate.

\section{A Complete Example}

\label{app:full-example}

In this section, we present in full details a very simple example of
how the reduction presented above work. We consider a very simple
``language'' with configurations of the form $[s, i]$ (where $s$ and
$i$ are naturals) that add to $s$ the first $i$ positive naturals.

Let $\Sigma = (\left\{ \Cfg, \Nat, \mathit{Bool} \right\}, F)$, where:

\begin{list}{$\bullet$}{}

\item $F_{(),\Nat} = \left\{0,1,2\ldots\right\}$

\item $F_{(),\mathit{Bool}} = \left\{\mathit{True}, \mathit{False}\right\}$

\item $F_{(\Nat, \Nat), \Nat} = \left\{+,-,/,*\right\}$

\item $F_{(\Nat, \Nat), \mathit{Bool}} = \left\{<,>,\leq,\geq,=\right\}$

\item $F_{(\Nat, \Nat), \Cfg} = \left\{[,]\right\}$

\end{list}

We consider a $\Sigma$-algebra $\mathcal{T}$ with the expected
interpretation for common symbols and a system of reachability rules
$S$ consisting of a single rule:

\begin{center}
  $[s, i] \wedge (i > 0) \Rightarrow [s + i, i - 1]$, where $s, i \in
  \Var_{\Nat}$.
\end{center}

The algebra $\ext(\mathcal{T})$ contains a sort $\Cfg'$ and, by
definition, $\ext(S)$ consists of the following rule:

\begin{center}

  $([s, i], n) \wedge (i > 0) \Rightarrow ([s + i, i - 1], n - 1)$,
  where $s, i, n \in \Var_{\Nat}$.
  
\end{center}

For ease of readability let $\SUM(x, y) = y * (y+1)/2 - (x-1)*x/2$ by
notation. Let $\varphi_L = ([\SUM(n'+1, n), n'], n') \wedge n' \geq 0$
and $\varphi_R = \exists m ([\SUM(1,n), 0], m)$, where $n',n,m \in
\Var_{\Nat}$. Let us now prove that \[\ext(S) \vdash ([0, n], n)
\aptrans \varphi_R,\] which establishes not only that $[0, n]$
computes the sum from $1$ to $n$ (by the soundness of reachability
logic), but also that it terminates within $n$ steps (by
Theorem~\ref{th:main}):

14. $\ext(S), {\left\{\exists n' \varphi_L \aptrans \varphi_R\right\}} \vdash \exists n' \varphi_L \aptrans \varphi_R$\hfill by \textbf{Axiom}

13. $\ext(S), {\left\{\exists n' \varphi_L \aptrans \varphi_R\right\}} \vdash ([\SUM(n', n), n' - 1], n' - 1) \wedge (n' - 1) \geq 0 \aptrans \varphi_R$

\hfill by \textbf{Consequence} from 14

12. $\ext(S) \vdash_{\left\{\exists n' \varphi_L \aptrans \varphi_R\right\}} ([\SUM(n'+1, n), n'], n') \wedge n' > 0 \aptrans$

$\qquad ([\SUM(n', n), n' - 1], n' - 1) \wedge (n' - 1) \geq 0$\hfill by \textbf{Step}

11. $\ext(S) \vdash_{\left\{\exists n' \varphi_L \aptrans \varphi_R\right\}} ([\SUM(n'+1, n), n'], n') \wedge n' > 0 \aptrans \varphi_R$

\hfill by \textbf{Transitivity} from 12 and 13

10. $\ext(S) \vdash_{\left\{\exists n' \varphi_L \aptrans \varphi_R\right\}} ([\SUM(1, n), 0], 0) \aptrans ([\SUM(1,n), 0], 0)$

\hfill by \textbf{Reflexivity}

9. $\ext(S) \vdash_{\left\{\exists n' \varphi_L \aptrans \varphi_R\right\}} ([\SUM(1, n), 0], 0) \aptrans \varphi_R$\hfill by \textbf{Consequence} from 10

8. $\ext(S) \vdash_{\left\{\exists n' \varphi_L \aptrans \varphi_R\right\}} \exists n' (([\SUM(n'+1, n), n'], n') \wedge n' > 0) \aptrans \varphi_R$

\hfill by \textbf{Abstraction} from 11

7. $\ext(S) \vdash_{\left\{\exists n' \varphi_L \aptrans \varphi_R\right\}} \exists (n' ([\SUM(n'+1, n), n'], n') \wedge n' = 0) \aptrans \varphi_R$

\hfill by \textbf{Consequence} from 9

6. $\ext(S) \vdash_{\left\{\exists n' \varphi_L \aptrans \varphi_R\right\}} \exists n' (([\SUM(n'+1, n), n'], n') \wedge n' > 0) \vee$

$\exists n' (([\SUM(n'+1, n), n'], n') \wedge n' = 0) \aptrans \varphi_R$ \hfill by \textbf{Case analysis} from 7, 8

5. $\ext(S) \vdash_{\left\{\exists n' \varphi_L \aptrans \varphi_R\right\}} \exists n' \varphi_L \aptrans \varphi_R$\hfill by \textbf{Consequence} from 6

4. $\ext(S) \vdash ([0, n], n) \aptrans ([0, n], n)$\hfill by \textbf{Reflexivity}

3. $\ext(S) \vdash \exists n' \varphi_L \aptrans \varphi_R$\hfill by \textbf{Circularity} from 5

2. $\ext(S) \vdash ([0, n], n) \aptrans \exists n' \varphi_L$\hfill by \textbf{Consequence} from 4

1. $\ext(S) \vdash ([0, n], n) \aptrans \varphi_R$\hfill by \textbf{Transitivity} from 2 and 3

Our approach also works on the programming language IMP described
above. We have shown, for example, that the following program is
(unsurprisingly) totally correct (when ${\tt m}$ starts up with a
nonnegative number):

\begin{verbatim}
s := 0
while not (m = 0) do s := s + m; m := m - 1
\end{verbatim}

The main idea in proving the program above totally correct is the same
as in the fully developed example above, but the formal proof is a lot
longer.

\end{document}